\def\Full{\rho}
\def\Residue{\sigma}
\def\pure{\phi}
\def\opure{\widetilde{\pure}}
\def\OO{{O_{p}^{k}}}
\def\H{H^{k}}
\newcommand{\FState}[1]{\Full^{k}_{#1}}
\newcommand{\RState}[1]{\Residue^{k}_{#1}}
\newcommand{\pState}[1]{\pure^{k}_{#1}}
\newcommand{\uState}[1]{\pure^{0}_{#1}}
\newcommand{\amp}[1]{\alpha^{k}_{#1}}
\newcommand{\uamp}[1]{\alpha^{0}_{#1}}
\newcommand{\U}[1]{U_{#1}}
\newcommand{\OFState}[1]{\widetilde{\Full}^{k}_{#1}}
\newcommand{\OpState}[1]{\widetilde{\pure}^{k}_{#1}}
\def\01{\{0,1\}}
\newcommand{\ket}[1]{|#1\rangle}
\newcommand{\bra}[1]{\langle#1|}
\newcommand{\ketbra}[2]{|#1\rangle\langle#2|}
\newcommand{\norm}[1]{{\left\|{#1}\right\|}}
\newcommand{\normb}[1]{{\big\|{#1}\big\|}}
\newcommand{\inp}[2]{\langle{#1}|{#2}\rangle} 
\newcommand{\C}{\mathbb{C}}
\newcommand{\trnorm}[1]{\norm{#1}_{\rm tr}}
\newtheorem{definition}{Definition}
\newtheorem{theorem}{Theorem}
\newtheorem{lemma}[theorem]{Lemma}
\newtheorem{claim}[theorem]{Claim}
\renewcommand{\qed}{\hfill{\rule{2mm}{2mm}}}
\renewenvironment{proof}[1][]{\begin{trivlist}
\item[\hspace{\labelsep}{\bf\noindent Proof#1:\/}] }{\qed\end{trivlist}}
\begin{document}

\title{Impossibility of a Quantum Speed-up with a Faulty Oracle}
\author{
Oded Regev\thanks{School of Computer Science, Tel-Aviv University, Tel-Aviv 69978, Israel. Supported
   by the Binational Science Foundation, by the Israel Science Foundation, and
   by the European Commission under the Integrated Project QAP funded by the IST directorate as Contract Number 015848.
}
\and
Liron Schiff\thanks{School of Computer Science, Tel-Aviv University, Tel-Aviv 69978, Israel.}
}
\date{}
\maketitle

\begin{abstract}
We consider Grover's unstructured search problem in the setting where each oracle call has some small probability of
failing. We show that no quantum speed-up is possible in this case.
\end{abstract}

\section{Introduction}

\paragraph{Unstructured search problem:}
The unstructured search problem, also known as the unordered search problem or as Grover's search problem, is the most basic problem in the query model. The goal is to find a marked
entry out of $N$ possible entries. In this model the entries are accessible only through a black box
(the oracle), and the complexity of the algorithm is measured in terms of the number of oracle queries.
In the classical world, it is easy to see that solving this search problem requires $\Theta(N)$
queries, even if we allow randomization. In the quantum world, however, one can find a marked
item with only $O(\sqrt{N})$ queries, as was shown in Grover's seminal paper \cite{Grover96}.
Moreover, it is known that this is optimal (see, e.g.,~\cite{BennettBBV97,BoyerBHT98,Ambainis00}).
This remarkable quadratic improvement is considered one of the biggest successes of quantum computing,
and has sparked a huge interest in the quantum query model (see \cite{Ambainis05} for a recent survey).

\paragraph{Searching with a faulty oracle:}
In this paper we consider the unstructured search problem in the {\em faulty oracle model}, a question
originally presented to us by Harrow~\cite{Harrow06}.
In this model, each oracle call succeeds with some probability
$1-p$, and with the remaining probability $p$ the state given to the oracle remains unchanged. More formally,
each oracle call maps an input state $\rho$ into $(1-p)\cdot O \rho O^\dag + p\cdot \rho$ where $O$ is the original
(unitary) oracle operation.
We note that this model can be seen to be equivalent to other, seemingly more realistic, models of faults,
such as the model considered in Shenvi et al. \cite{ShenviBW03} in which
the oracle's operation is subject to small random phase fluctuations.

Our motivation for considering the faulty oracle model is twofold.
First, we believe that since the unstructured search
problem is such a basic question, it is theoretically interesting to consider it in different settings,
as this might shed more light on the strengths and weaknesses of quantum query algorithms.
A second motivation is related to implementation aspects of quantum query algorithms,
as one can expect any future implementation of a Grover oracle to be imperfect
(see \cite{ShenviBW03} for a further discussion of the physical significance of the model).

To motivate our main result and to get some intuition for the model, let us consider the behavior of Grover's original
algorithm in this setting. Recall that Grover's algorithm can be seen as a sequence of two alternating reflections,
$OUOUOU\cdots OU$ where $U$ is the reflection given by Grover's algorithm and $O$ is the reflection representing the
oracle call. In the analysis of Grover's algorithm, one observes that the state of the system is restricted to a
two-dimensional subspace, inside which lie the initial state and the target state. The angle between these two states
is essentially $\pi/2$. Furthermore, the combined operation $OU$ of two consecutive reflections can be seen a rotation
by an angle of essentially $1/\sqrt{N}$ inside this two dimensional subspace. Hence the total number of oracle calls
required to get to the target state is $O(\sqrt{N})$.

In the faulty oracle model, each oracle call $O$ has some constant probability of not doing anything. Hence, the
sequence of reflections might look like $OUOUOUUOUOUOUO$. The effect of this is that after a sequence of rotations $OU$
by $1/\sqrt{N}$, we instead obtain a sequence of rotations $UO=(OU)^\dag$ by $-1/\sqrt{N}$ which cancel the previous
ones. The cancellation can also be seen by noting that $U^2=O^2=I$. The end result is that instead of rotating towards
the target, our rotation behaves like a random walk, alternating between steps of $1/\sqrt{N}$ and steps of
$-1/\sqrt{N}$. Using known properties of random walks on a line, the number of steps required for this walk to reach
the target is $\Theta(N)$, which shows that Grover's algorithm is no better than the naive classical search algorithm.

But can there be another, more sophisticated algorithm that copes better with the faults?
Our main result shows that the answer is essentially `no'.

\paragraph{Our result:}

Our main result shows that there is essentially no quantum advantage when searching with
a faulty oracle.

\begin{theorem}\label{thm:main-thm}
Any algorithm that solves the $p$-faulty Grover problem must use $T > \frac{p}{10(1-p)} N$ queries.
\end{theorem}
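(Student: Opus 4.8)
The plan is to show that after $T$ faulty-oracle queries, the state of any algorithm remains close (in trace distance) to the state it would have reached had the oracle been \emph{completely} non-functional, i.e. had every query left the state unchanged. The intuition from the faulty-oracle model is that each query applies the intended unitary $O$ only with probability $1-p$, and does nothing with probability $p$; if $T$ is small relative to $N$, then the cumulative ``signal'' from the few queries that do fire is too weak to distinguish a marked database from an unmarked one, and hence too weak to locate the marked item. I would make this precise by a hybrid argument that tracks the divergence between the true algorithm and a fictitious ``oracle-free'' algorithm.

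Concretely, first I would set up notation for the mixed state $\FState{}$ of the algorithm at step $k$ when the real faulty oracle is used, and the state $\uState{}$ of the same algorithm when the oracle is replaced by the identity channel (so the database index never enters the computation). Because the faulty-oracle channel is the convex combination $(1-p)\,O(\cdot)O^\dag + p\,(\cdot)$, I would expand $\FState{}$ by purifying over the binary ``fire/fail'' choice at each step and track how far the component that has fired $j$ times drifts from $\uState{}$. The key quantitative step is to bound, for a single marked item, the trace-distance increment contributed by one firing of the oracle: since $O$ acts as a reflection about a single basis state, $\norm{O\ket{\psi} - \ket{\psi}}$ is controlled by the amplitude that $\ket{\psi}$ places on the marked coordinate, and for a normalized state that amplitude is on average $O(1/\sqrt{N})$ across the $N$ possible marked locations. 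This is the crux of the quadratic barrier.

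The heart of the argument is therefore an averaging/hybrid estimate over the $N$ possible oracle instances. I would let $\Full_i$ denote the final state when item $i$ is the marked one and $\Full_\emptyset$ the state with no firing (equivalently $\uState{}$), and bound the \emph{average} over $i$ of $\trnorm{\Full_i - \Full_\emptyset}$. Each query that fires contributes a reflection localized at coordinate $i$, so summing the per-step contributions and using Cauchy--Schwarz over the $N$ coordinates gives an average trace distance of order $\sqrt{T p / N}$ up to the $(1-p)$ factor counting the expected number of firings. A successful search algorithm must, for almost every $i$, output $i$ with constant probability, which forces the states $\Full_i$ to be pairwise far apart and in particular far from the common reference $\Full_\emptyset$ on average; this pushes the average trace distance to be $\Omega(1)$. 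Combining the upper and lower bounds yields $T p/N = \Omega(1)$, i.e. $T = \Omega(N/p)$, and then I would track the constants carefully to extract the stated bound $T > \frac{p}{10(1-p)}N$.

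The main obstacle I anticipate is getting the accounting over the random fire/fail pattern right while keeping the constants tight enough to reach the explicit $\frac{1}{10}$ factor. The subtlety is that the trace-distance increments across steps do not simply add: the reflections $O$ compose with the intervening algorithm unitaries $\U{}$ in a way that could, in principle, accumulate coherently rather than incoherently. The faulty model is precisely what prevents coherent accumulation---the probabilistic mixing decoheres the rotation into a random-walk-like behavior, as the informal discussion of Grover's algorithm suggests---so I would need a careful triangle-inequality or telescoping bound that exploits the convex-combination structure of the channel to show the increments add only as a sum (giving the $T p$ growth) rather than constructively. Making that decoherence rigorous, and ensuring the Cauchy--Schwarz step loses nothing worse than the claimed constant, is where the real work lies.
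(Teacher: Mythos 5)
Your high-level intuition is right, and you have correctly located the crux --- the per-query increments must accumulate incoherently (in square), like a random walk, rather than linearly --- but the toolkit you propose cannot deliver that, and this is a genuine gap rather than a detail to be filled in. A hybrid/telescoping argument on trace distance, combined with the triangle inequality and Cauchy--Schwarz over the $N$ instances, is exactly the BBBV argument: it bounds the average deviation by $\frac{1}{N}\sum_{t}\sum_{k}\|\uamp{t,k}\| \le T/\sqrt{N}$ and hence can only ever yield $T=\Omega(\sqrt{N})$. The triangle inequality gives you the \emph{sum} of the per-step increments, and no amount of care in the telescoping will replace that sum by the square root of the sum of squares, which is what $\Omega(N)$ requires. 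The decoherence has to enter as a quantitative \emph{negative} term in the per-step change of the progress measure, and the trace distance between the true and oracle-free mixed states does not expose such a term. The paper's mechanism is different: it tracks an \emph{unnormalized pure} component $\ket{\pState{t}}$ of the true state, writing $\FState{t}=\ketbra{\pState{t}}{\pState{t}}+\RState{t}$ with $\RState{t}$ positive semidefinite (Lemma~\ref{lem:g_decompose}). Each query leaks probability mass $4p(1-p)\|\amp{t,k}\|^2$ out of the tracked component into the residue, and with the progress measure $\H_t=\|\ket{\uState{t}}-\ket{\pState{t}}\|^2$ this leakage contributes a term $-4p(1-p)\|\amp{t,k}\|^2$ that, after maximizing a quadratic in $\|\amp{t,k}\|$, caps the per-step increase at $\frac{1-p}{p}\|\uamp{t,k}\|^2$ (Lemma~\ref{lem:dH}). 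These are \emph{squared} amplitudes, so summing over $t$ and $k$ gives $\frac{1-p}{p}T$ against a required total of $N/10$, whence the linear bound.

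A second, independent sign that your accounting is off is the claimed conclusion $T=\Omega(N/p)$: the $p$-dependence is inverted. As $p\to 0$ the oracle becomes perfect and Grover's algorithm succeeds with $O(\sqrt{N})$ queries, so any correct lower bound must \emph{weaken} as $p$ decreases; the theorem's bound $\frac{p}{10(1-p)}N$ does, while $N/p$ diverges. The signal per query scales with $1-p$ (the firing probability) and the decoherence with $p(1-p)$, and it is the ratio of these that produces the $\frac{p}{1-p}$ factor. A minor further point: the paper proves the lower bound for the easier decision problem (is $k=0$ or not) rather than the search formulation you use, which strengthens the result but does not change the substance.
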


\noindent
In particular, for any constant $p>0$, this gives a lower bound of $\Omega(N)$.

Notice that the above statement holds for {\em any} quantum algorithm, and
not just for Grover's algorithm. In particular, it shows that some natural approaches,
like fault-tolerant quantum computation \cite{KnillZurekLaflamme},
cannot help in this setting. Note, however, that this impossibility result applies
only in case that the oracle is truly a black-box oracle; if, instead, the oracle is
given as a faulty circuit, then fault-tolerant schemes \emph{can} be used
to achieve a quantum speed-up by applying them to the
circuit obtained by taking Grover's algorithm and replacing the oracle calls with
their circuit implementation.

\paragraph{Related work:}
There has been a considerable amount of work dedicated to analyzing Grover's algorithm in
all kinds of faulty settings (see, e.g., \cite{LongLZT00,ShenviBW03,ShapiraMB03}).
All these works concentrate on Grover's algorithm (or variants thereof) and
none of them give a general statement that applies to all algorithms.
In particular, Shenvi et al. \cite{ShenviBW03} analyze the behavior of Grover's
algorithm in a physically motivated model that is
equivalent to ours. Our result answers the main open question presented in their paper.

There has also been a significant amount of work on searching with an imperfect, but still unitary, oracle (see, e.g., \cite{BrassardHMT02,HoyerMdW03,BuhrmanNRdW07,IwamaRY05,SuzukiYNW06}).
Such oracles are sometimes known as {\em noisy} oracles.
The motivation for this model is algorithmic, and is related to what is known as amplitude amplification. Typically
in this case, the quantum speed-up of $O(\sqrt{N})$ is still achievable. Very roughly speaking, this is
because a unitary operation (even an imperfect one) is reversible and does not lead to decoherence.
There has also been some recent work on analyzing the case of an imperfect unitary implementation of Grover's algorithm (as opposed to an imperfect \emph{oracle})
\cite{MagniezNRS07}, again showing that a speed-up of $O(\sqrt{N})$ is achievable.

\paragraph{Open problems:}
One interesting open question is to extend our result to other physically interesting fault models.
We believe that our proof technique should be applicable in a more general setting.
One natural fault model suggested to us by Nicolas Cerf is the one
in which each oracle query has probability $p$ of turning the state into
the completely mixed state.
Also, is there \emph{any} reasonable fault model for which a quantum speed-up \emph{is} achievable?
We suspect that the answer is no.

Another open question is to extend our result to other search problems
(see \cite{Ambainis05} for a recent survey).
Is there {\em any} search problem for which a quantum speed-up is achievable with a faulty oracle?
Can one extend our lower bound to a more general lower
bound in the spirit of the adversary method (see \cite{Ambainis00,HoyerLS07})?
It is also worth investigating whether the polynomial method \cite{BealsBCMdW01} can be used
to derive lower bounds in the faulty oracle case; our attempts to do so were unsuccessful.
We should emphasize, however, that our faulty oracle model is not necessarily
so natural for other search problems, and before approaching the above open questions,
some thought should be given to the choice of the faulty oracle model.

\section{Preliminaries}

We assume familiarity with basic notions of quantum computation (see \cite{Nielsen:book}).

\begin{definition}[Grover oracle]
For each $k\in\{1,\ldots,N\}$ where $N$ is an integer,
the \emph{perfect oracle} $\hat{O}^{k}$ is the unitary transformation acting on an $N$-dimensional register that maps $\ket{k}$ to $-\ket{k}$
and $\ket{i}$ to $\ket{i}$ for each $i \neq k$, i.e.,
$$\hat{O}^k = - \ketbra{k}{k} + \sum_{i \neq k} \ketbra{i}{i}.$$
We also extend the definition to $k=0$ by defining $\hat{O}^0$ to be the `null' oracle, given by the identity matrix $I$.
\end{definition}

\begin{definition}\label{def:faultyO}
The \emph{$p$-faulty oracle} $\OO$ is defined as the operation that with probability $1-p$,  acts as the perfect oracle $\hat{O}^{k}$ and otherwise does nothing, i.e., for any density matrix $\Full$,
$$\OO(\Full) = (1-p)\cdot \hat{O}^k \Full \hat{O}^{k \dag} + p\cdot\Full.$$
\end{definition}

\noindent
We note that instead of our phase-flipping oracle, one could also consider a bit-flipping oracle. Since
it is not difficult to construct the latter from the former  (see, e.g.,
\cite[Chapter 8]{KayeLM07}), our lower bound also applies to the bit-flipping case.

\begin{definition}
Let $0< p < 1$ be some constant.
In the \emph{$p$-faulty Grover problem}, we are given oracle access to the $p$-faulty oracle $\OO$ for some unknown $k \in \{0,\ldots,N\}$ and our goal is to decide whether $k=0$ or not with success probability at least $\frac{9}{10}$.
\end{definition}

\noindent
Note that the choice of success probability is inconsequential, as one can easily increase it by repeating
the algorithm a few times. Also note that we consider here the decision problem, as opposed to the search
problem of recovering $k$ from $\OO$. Since we are interested in lower bounds, this makes our result
stronger.

\section{Proof}

We start by giving a brief outline of the proof. For simplicity, we consider the case $p=1/2$.
The proof starts with a simple, yet crucial, observation (Claim~\ref{clm:O_pure}) which
gives an alternative description of the faulty oracle. In the case $p=1/2$, it says that
the oracle $\OO$ is essentially performing the two-outcome measurement
given by $\{\ket{k}, \ket{k}^\perp\}$. Then, in Lemma~\ref{lem:g_decompose}, we `approximate' the
mixed states that arise during the algorithm with (unnormalized) pure states. This is done by
assuming that the measurements done by the oracle all end up in the $\ket{k}^\perp$ subspace.
The rest of the proof is similar in structure to previous lower bounds. Using the pure state description,
we define a progress measure $\H_{t}$, which is initially zero. We show that at the end of the algorithm it must be high (Lemma~\ref{lem:F}),
and that it cannot increase by too much at each step (Lemma~\ref{lem:dH}). This yields the
desired lower bound on the number of queries $T$. We now proceed with the proof.

Let $A$ be an algorithm for the $p$-faulty Grover problem on $N$ elements that uses $T$ queries.
Assume the algorithm is described by the unitary operations $\U{0},\U{1},\U{2},\ldots,\U{T}$ acting
on an $NM$-dimensional system, composed of an $N$-dimensional query register used as oracle input, and an
$M$-dimensional ancillary register. Let $\widetilde{\Full}_0$ denote the initial state of the system, which we assume without
loss of generality to be a pure state $\widetilde{\Full}_0=\ket{\widetilde{\pure}_0}\bra{\widetilde{\pure}_0}$. For $k \in \{0,\ldots,N\}$,
we let $\OFState{0}=\widetilde{\Full}_0$, $\FState{0}=\U{0}\OFState{0} \U{0}^\dag$,
$\OFState{1}=\OO (\FState{0})$, $\FState{1} = \U{1}\OFState{1} \U{1}^\dag, \ldots,\FState{T} = \U{T}\OFState{T} \U{T}^\dag$ be the intermediate states of the algorithm
when run with oracle $\OO$  (see Figure~\ref{fig:progress}). In other words, $\FState{t}$ is the state of the system right after applying $\U{t}$,
and $\OFState{t+1}$ is the state of the system right after applying $\OO$ on $\FState{t}$.

\begin{figure}[h]
\center{
 \epsfxsize=5in
 \epsfbox{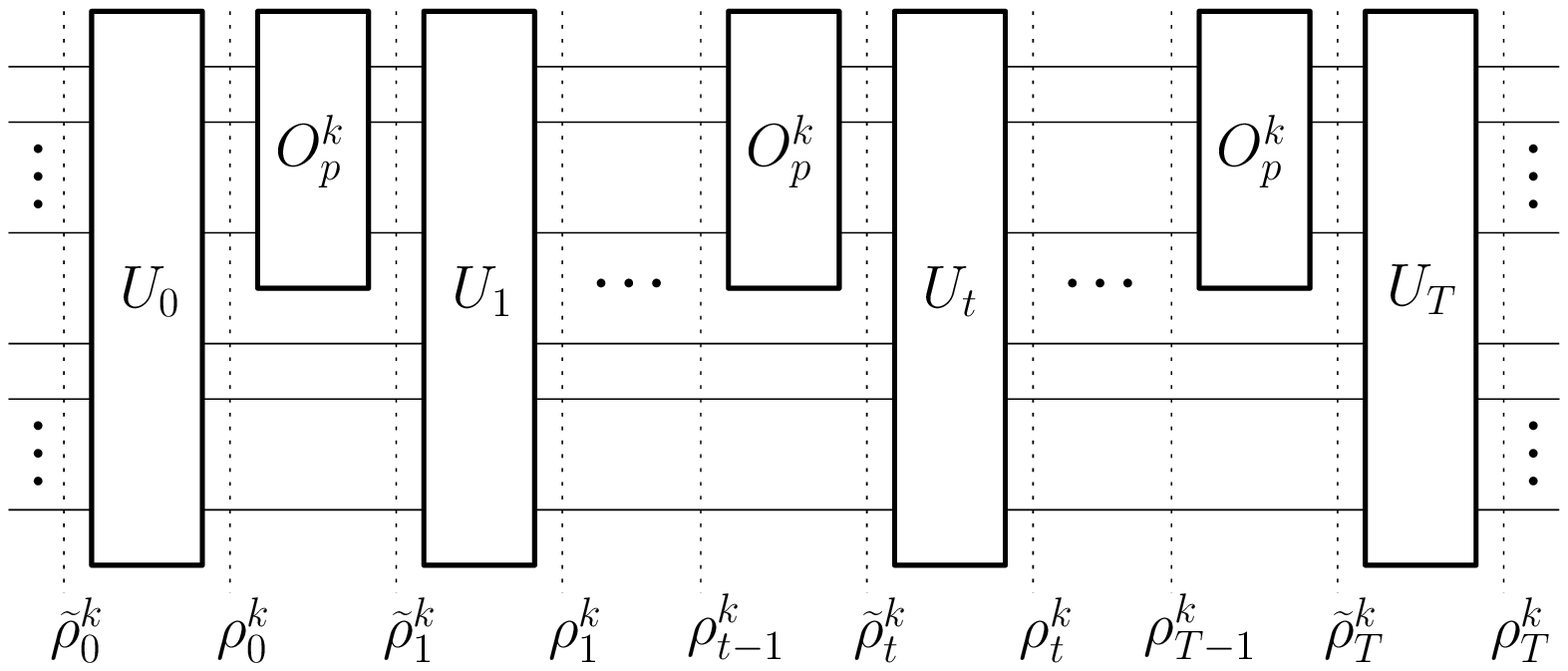}}
 \caption{State evolution.}
 \label{fig:progress}
\end{figure}

First we show a different way to decompose the outcome of $\OO$.
\begin{claim}\label{clm:O_pure}
Let $\ket{\pure} \in \C^{N\cdot M}$ be an arbitrary vector and let $\ket{\beta_i} \in \C^M$
be such that $\ket{\pure}=\sum_{i=1}^{N} \ket{i,\beta_i}$. Then
$$\OO(\ketbra{\pure}{\pure}) = \ketbra{\opure}{\opure}  + 4p(1-p)\ket{k,\beta_k}\bra{k,\beta_k}$$
    where
    $$\ket{\opure} := \sum_{i=1}^{N}\ket{i,\beta_i} - 2(1-p)\ket{k,\beta_k}.$$
\end{claim}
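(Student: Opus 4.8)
The plan is to prove the identity by direct expansion, since both sides are explicit rank-at-most-two operators and nothing beyond linear algebra is involved. The first step is to record how the perfect oracle acts on $\ket{\pure}$. Writing $\ket{\pure}=\sum_{i=1}^{N}\ket{i,\beta_i}$ and recalling that $\hat{O}^k$ flips the sign of the $\ket{k}$ component of the query register while acting as the identity on all other basis states and on the ancilla, I would observe that $\hat{O}^k\ket{\pure}=\ket{\pure}-2\ket{k,\beta_k}$. It is convenient to abbreviate $\ket{v}:=\ket{k,\beta_k}$, so that $\hat{O}^k\ket{\pure}=\ket{\pure}-2\ket{v}$ and the claimed pure state is $\ket{\opure}=\ket{\pure}-2(1-p)\ket{v}$.

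Next I would substitute into Definition~\ref{def:faultyO} and expand. The operation $\OO(\ketbra{\pure}{\pure})$ equals $(1-p)(\ket{\pure}-2\ket{v})(\bra{\pure}-2\bra{v})+p\,\ketbra{\pure}{\pure}$. Expanding the first product yields the four contributions $\ketbra{\pure}{\pure}$, $-2\ketbra{\pure}{v}$, $-2\ketbra{v}{\pure}$, and $4\ketbra{v}{v}$, all scaled by $(1-p)$; combining the two $\ketbra{\pure}{\pure}$ pieces collapses their coefficient to $1$. This gives
\[
\OO(\ketbra{\pure}{\pure}) = \ketbra{\pure}{\pure} - 2(1-p)\ketbra{\pure}{v} - 2(1-p)\ketbra{v}{\pure} + 4(1-p)\ketbra{v}{v}.
\]

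The final step is to check that the claimed right-hand side expands to exactly this. Expanding $\ketbra{\opure}{\opure}=(\ket{\pure}-2(1-p)\ket{v})(\bra{\pure}-2(1-p)\bra{v})$ reproduces the $\ketbra{\pure}{\pure}$ term and the two cross terms with precisely the coefficient $-2(1-p)$, matching the display above, while producing a $\ketbra{v}{v}$ coefficient of only $4(1-p)^2$. Adding the correction term $4p(1-p)\ketbra{v}{v}$ then completes the identity via $4(1-p)^2+4p(1-p)=4(1-p)$, so the two expressions coincide and the claim follows. (The degenerate case $k=0$ is immediate, since $\hat{O}^0=I$ forces $\ket{\opure}=\ket{\pure}$ and the correction term to vanish.)

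There is no genuine obstacle here beyond careful bookkeeping of the four terms; the only point worth flagging is that the cross terms automatically carry the same coefficient $-2(1-p)$ on both sides, so no cancellation is needed there, and that the summand $4p(1-p)\ketbra{v}{v}$ is engineered precisely to upgrade the $\ketbra{v}{v}$ coefficient from $4(1-p)^2$ to $4(1-p)$. This is exactly why the decomposition is forced to take the stated form, and it is the single algebraic fact one must verify to close the argument.
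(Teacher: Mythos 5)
Your proof is correct and follows essentially the same route as the paper: expand $p\ketbra{\pure}{\pure}+(1-p)\hat{O}^k\ketbra{\pure}{\pure}\hat{O}^{k\dag}$ directly and regroup into a rank-one term plus a $\ketbra{k,\beta_k}{k,\beta_k}$ correction, with the key identity $4(1-p)^2+4p(1-p)=4(1-p)$ (equivalently, the paper's $1-(1-2p)^2=4p(1-p)$). Your abbreviation $\ket{v}=\ket{k,\beta_k}$ merely streamlines the same bookkeeping the paper carries out with explicit sums over $i\neq k$.
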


\begin{proof}
By Definition~\ref{def:faultyO} we have
$$\OO(\ketbra{\pure}{\pure}) = p \ketbra{\pure}{\pure} + (1-p)\ket{\psi}\bra{\psi}$$
where $\ket{\psi} = \sum_{i\neq k} \ket{i,\beta_i}  - \ket{k,\beta_k}$.
Therefore
\begin{align*}
\OO(\ketbra{\pure}{\pure}) = & \sum_{i\neq k} \sum_{j\neq k}\ket{i,\beta_i}\bra{j,\beta_j} - (1-2p)\sum_{j\neq k}\ket{k,\beta_k}\bra{j,\beta_j} - (1-2p)\sum_{i\neq k}\ket{i,\beta_i}\bra{k,\beta_k} + \ket{k,\beta_k}\bra{k,\beta_k}\\
=&  \left(\sum_{i\neq k}\ket{i,\beta_i} - (1-2p)\ket{k,\beta_k}\right) \left(\sum_{j\neq k}\bra{j,\beta_j} -(1-2p)\bra{k,\beta_k}\right)\\
    &+(1-(1-2p)^2)\ketbra{k,\beta_k}{k,\beta_k}.
\end{align*}
\end{proof}

We will use the following vectors to track the progress of the algorithm.
\begin{definition}\label{def:PState}
For $k \in \{0,\ldots,N\}$ and $t \in \{0,\ldots, T\}$ we define
the vectors $\ket{\pState{t}},\ket{\OpState{t}}\in \C^{N \cdot M}$ and $\ket{\amp{t,i}} \in \C^{M}$ as follows. First,
\begin{align*}
\ket{\OpState{0}} &:= \ket{\widetilde{\pure}_0}, \\
\ket{\pState{t}} &:= \U{t}\ket{\OpState{t}}
\end{align*}
and $\ket{\amp{t,i}}$ are given by
$$\ket{\pState{t}} = \sum_{i=1}^N \ket{i,\amp{t,i}}.$$
Finally, for $k \in \{1,\ldots,N\}$ and $t \in \{0,\ldots,T-1\}$ we define
$$\ket{\OpState{t+1}} :=  \ket{\pState{t}} - 2(1-p)\ket{k,\amp{t,k}} = \sum_{i=1}^N \ket{i,\amp{t,i}} - 2(1-p)\ket{k,\amp{t,k}}$$
and for $k=0$ we define $\ket{\widetilde{\pure}^{0}_{t+1}} :=  \ket{{\pure}^{0}_{t}}$.
\end{definition}

\begin{lemma}\label{lem:g_decompose}
For all $t \in \{0,\ldots,T\}$ and $k \in \{1,\ldots,N\}$, we can write
    $$ \FState{t} = \ket{\pState{t}}\bra{\pState{t}} + \RState{t}$$
for some positive semidefinite matrix $\RState{t}$.
\end{lemma}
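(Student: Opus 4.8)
The plan is to proceed by induction on $t$, holding $k\in\{1,\ldots,N\}$ fixed throughout. The statement asserts that the true mixed state $\FState{t}$ always dominates the rank-one pure-state approximation $\ketbra{\pState{t}}{\pState{t}}$, with the gap $\RState{t}$ staying positive semidefinite. The real content is that the recursive definition of $\ket{\OpState{t+1}}$ in Definition~\ref{def:PState} was engineered precisely to track the $\ket{\opure}$ part produced by Claim~\ref{clm:O_pure}, so that each oracle application contributes only a manifestly positive-semidefinite term to the residue. For the base case $t=0$, I would observe that $\ket{\pState{0}} = \U{0}\ket{\OpState{0}} = \U{0}\ket{\widetilde{\pure}_0}$ while $\FState{0} = \U{0}\OFState{0}\U{0}^\dag = \U{0}\ketbra{\widetilde{\pure}_0}{\widetilde{\pure}_0}\U{0}^\dag$, so the two coincide and we may take $\RState{0}=0$, which is trivially positive semidefinite.

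For the inductive step, assume $\FState{t} = \ketbra{\pState{t}}{\pState{t}} + \RState{t}$ with $\RState{t}$ positive semidefinite. Since $\OO$ is linear, applying it gives $\OFState{t+1} = \OO(\ketbra{\pState{t}}{\pState{t}}) + \OO(\RState{t})$. I would then invoke Claim~\ref{clm:O_pure} with $\ket{\pure} = \ket{\pState{t}} = \sum_i \ket{i,\amp{t,i}}$, i.e. with $\ket{\beta_i} = \ket{\amp{t,i}}$; by the very definition of $\ket{\OpState{t+1}}$ this yields
$$\OO(\ketbra{\pState{t}}{\pState{t}}) = \ketbra{\OpState{t+1}}{\OpState{t+1}} + 4p(1-p)\ketbra{k,\amp{t,k}}{k,\amp{t,k}}.$$
Conjugating the resulting expression for $\OFState{t+1}$ by $\U{t+1}$ and using $\ket{\pState{t+1}} = \U{t+1}\ket{\OpState{t+1}}$, the first term becomes exactly $\ketbra{\pState{t+1}}{\pState{t+1}}$, so I would define the residue to be whatever remains, namely
$$\RState{t+1} := \U{t+1}\Big(4p(1-p)\ketbra{k,\amp{t,k}}{k,\amp{t,k}} + \OO(\RState{t})\Big)\U{t+1}^\dag.$$

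It then remains only to verify that $\RState{t+1}$ is positive semidefinite, which is where three elementary closure properties are used together: the rank-one term $\ketbra{k,\amp{t,k}}{k,\amp{t,k}}$ is positive semidefinite and $4p(1-p)\ge 0$ since $0<p<1$; the channel $\OO$ maps positive semidefinite matrices to positive semidefinite matrices, being the convex combination $(1-p)\hat{O}^k(\cdot)\hat{O}^{k\dag} + p(\cdot)$ of a unitary conjugation and the identity, so $\OO(\RState{t})$ is positive semidefinite by the induction hypothesis; and conjugation by the unitary $\U{t+1}$ preserves positive semidefiniteness. As sums of positive semidefinite matrices are positive semidefinite, the claim follows.

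I do not expect any genuine obstacle here; the argument is a clean induction. The only points requiring care are bookkeeping ones: confirming that the definition of $\ket{\OpState{t+1}}$ matches the $\ket{\opure}$ delivered by Claim~\ref{clm:O_pure} coordinate-for-coordinate, and noting that although the residue $\RState{t}$ accumulates contributions across steps, it never loses positivity because every operation applied to it along the way is positivity-preserving.
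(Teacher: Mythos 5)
Your proposal is correct and follows essentially the same induction as the paper: base case $\RState{0}=0$, then Claim~\ref{clm:O_pure} applied to $\ket{\pState{t}}$ and conjugation by $\U{t+1}$, with the residue defined as the leftover term. You merely spell out in more detail why that leftover term is positive semidefinite (the paper dismisses this as clear), which is a fine elaboration but not a different route.
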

\begin{proof}
Fix some $k \in \{1,\ldots,N\}$.
The lemma clearly holds for $t=0$ (with $\Residue^{k}_{0} = 0$). Suppose the lemma holds for $t$ and let us prove it for $t+1$.
By the induction hypothesis,
\begin{align}\label{eq:OFState}
\OFState{t+1}=\OO(\FState{t}) = \OO(\ket{\pState{t}}\bra{\pState{t}}) + \OO(\RState{t}).
\end{align}
By Claim~\ref{clm:O_pure} and the definition of $\ket{\OpState{t}}\bra{\OpState{t}}$
$$\OO(\ket{\pState{t}}\bra{\pState{t}}) = \ket{\OpState{t+1}}\bra{\OpState{t+1}} + 4p(1-p)\ket{k,\amp{t,k}}\bra{k,\amp{t,k}}.$$
By combining this with Eq.~\eqref{eq:OFState} we get
$$\OFState{t+1} = \ket{\OpState{t+1}}\bra{\OpState{t+1}} + 4p(1-p)\ket{k,\amp{t,k}}\bra{k,\amp{t,k}} + \OO(\RState{t}).$$
We apply $\U{t+1}$ and obtain
\begin{align*}
\FState{t+1} &= \U{t+1}\OFState{t+1}\U{t+1}^\dag\\
&= \U{t+1}\ket{\OpState{t+1}}\bra{\OpState{t+1}}\U{t+1}^\dag + \U{t+1}\left(4p(1-p)\ket{k,\amp{t,k}}\bra{k,\amp{t,k}} + \OO(\RState{t})\right)\U{t+1}^\dag\\
&= \ket{\pState{t+1}}\bra{\pState{t+1}} + \U{t+1}\left(4p(1-p)\ket{k,\amp{t,k}}\bra{k,\amp{t,k}} + \OO(\RState{t})\right)\U{t+1}^\dag.
\end{align*}
The second term is clearly positive semidefinite, as required.
\end{proof}

We now define our progress measure $\H_{t}$.
\begin{definition}\label{def:H}
For $t \in \{0,\ldots,T\}$ and $k \in \{1,\ldots,N\}$ we define
$$\H_{t} := \normb {\ket{\uState{t}}-\ket{\pState{t}}}^2 .$$
\end{definition}

\noindent
Notice that $\H_0 = 0$.
The following lemma shows that at the end of the algorithm, the progress measure must be not too small.
Intuitively, this holds since if $\H_T$ is small, then $\ket{\pState{T}}$ is close to $\ket{\uState{T}}$
and since the latter is a unit vector, the former must be of norm close to $1$. This, in turn, implies
that $\FState{T}$ is close to $\ket{\pState{T}}\bra{\pState{T}}$, which is close to
$\ket{\uState{T}}\bra{\uState{T}} = \Full^{0}_T$ and thus the algorithm cannot distinguish between
$\FState{T}$ and $\Full^{0}_T$ in contrast to our assumption about the algorithm. We proceed with the formal proof.
\begin{lemma}\label{lem:F}
For all $k\in \{1,\ldots,N\}$, $\H_{T} > \frac{1}{10}$.
\end{lemma}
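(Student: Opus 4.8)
The plan is to argue by contradiction: I will assume $\H_T \le \frac{1}{10}$ and derive that the two states the algorithm must tell apart---namely $\FState{T}$ (produced by $\OO$ for the given $k \ge 1$) and $\Full^{0}_{T}$ (produced by the null oracle)---are so close in trace distance that no measurement can separate them with probability $\frac{9}{10}$. Concretely, if the final measurement is the two-outcome POVM $\{M_0, I-M_0\}$ that reports ``$k=0$'', then success demands $\Tr(M_0 \Full^{0}_{T}) \ge \frac{9}{10}$ and $\Tr(M_0 \FState{T}) \le \frac{1}{10}$, while the standard distinguishability bound gives $\Tr\!\big(M_0(\Full^{0}_{T}-\FState{T})\big) \le \half \trnorm{\Full^{0}_{T}-\FState{T}}$ for any $0 \preceq M_0 \preceq I$. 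Subtracting the two success conditions forces $\trnorm{\Full^{0}_{T}-\FState{T}} \ge \frac{8}{5}$, so it suffices to show that $\H_T \le \frac{1}{10}$ makes the trace distance strictly smaller than $\frac{8}{5}$.

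To bound that trace distance I would invoke Lemma~\ref{lem:g_decompose} to write $\FState{T} = \ketbra{\pState{T}}{\pState{T}} + \RState{T}$ with $\RState{T} \succeq 0$, and note that the null-oracle evolution is purely unitary so $\Full^{0}_{T} = \ketbra{\uState{T}}{\uState{T}}$ with $\ket{\uState{T}}$ a unit vector. By the triangle inequality for the trace norm, $\trnorm{\FState{T}-\Full^{0}_{T}} \le \trnorm{\ketbra{\pState{T}}{\pState{T}}-\ketbra{\uState{T}}{\uState{T}}} + \Tr(\RState{T})$, where I used $\trnorm{\RState{T}} = \Tr(\RState{T})$ since $\RState{T}$ is positive semidefinite. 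The task thus splits into bounding the rank-one difference and the residual weight $\Tr(\RState{T})$ separately in terms of $\H_T$.

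For the first piece I would set $\ket{e} := \ket{\pState{T}} - \ket{\uState{T}}$, so that $\norm{e}^2 = \H_T$ by Definition~\ref{def:H}, expand $\ketbra{\pState{T}}{\pState{T}} - \ketbra{\uState{T}}{\uState{T}} = \ket{\uState{T}}\bra{e} + \ket{e}\bra{\uState{T}} + \ketbra{e}{e}$, and bound each summand's trace norm by its operator form, giving at most $2\sqrt{\H_T} + \H_T$. The residual term is the subtle one and is where I expect the only real work: since $\FState{T}$ has trace one, $\Tr(\RState{T}) = 1 - \norm{\pState{T}}^2$, and expanding $\norm{\pState{T}}^2 = 1 + 2\Re\inpc{\uState{T}}{e} + \H_T$ shows that positivity of $\RState{T}$ forces $\Re\inpc{\uState{T}}{e} \le 0$; Cauchy--Schwarz then yields $\Tr(\RState{T}) = -2\Re\inpc{\uState{T}}{e} - \H_T \le 2\sqrt{\H_T}$. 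The conceptual point is that the ``weight'' lost to the positive-semidefinite garbage $\RState{T}$ is controlled entirely by how far the tracking vector $\ket{\pState{T}}$ has drifted from the unit vector $\ket{\uState{T}}$, which is exactly what $\H_T$ measures.

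Combining the two estimates gives $\trnorm{\FState{T}-\Full^{0}_{T}} \le 4\sqrt{\H_T} + \H_T$. Finally I would simply plug in the contradiction hypothesis: for $\H_T \le \frac{1}{10}$ the right-hand side is at most $4/\sqrt{10} + 1/10 < \frac{8}{5}$, contradicting the lower bound $\frac{8}{5}$ forced by the success probability. Hence $\H_T > \frac{1}{10}$, as claimed. The main obstacle, as flagged above, is the residue estimate---turning $\Tr(\RState{T})$ into a statement about $\norm{\pState{T}}$ via the trace-one normalization of $\FState{T}$; everything else is a routine application of the trace-norm triangle inequality and the standard state-distinguishability bound.
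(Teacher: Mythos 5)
Your proof is correct, and it reaches the lemma by a genuinely different (and somewhat more elementary) route than the paper. The paper does not argue by contradiction: it passes the correctness requirement $\trnorm{\FState{T}-\Full^{0}_T}\ge \frac{9}{10}$ through the standard pure-versus-mixed inequality $\trnorm{\rho-\ketbra{\varphi}{\varphi}}\le\sqrt{1-\bra{\varphi}\rho\ket{\varphi}}$ with $\ket{\varphi}=\ket{\uState{T}}$, drops the nonnegative term $\bra{\uState{T}}\RState{T}\ket{\uState{T}}$, and thereby obtains an upper bound on the overlap $|\inp{\uState{T}}{\pState{T}}|$, from which $\H_{T}=1+\normb{\ket{\pState{T}}}^2-2\,{\rm Re}\,\inp{\uState{T}}{\pState{T}}\ge 1-2|\inp{\uState{T}}{\pState{T}}|>\frac{1}{10}$ follows in two lines. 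You instead split the trace distance by the triangle inequality into a rank-one part, bounded by $2\sqrt{\H_{T}}+\H_{T}$, plus the residue weight $\Tr(\RState{T})=1-\normb{\ket{\pState{T}}}^2$, which you control by $2\sqrt{\H_{T}}$ using positivity of $\RState{T}$ together with Cauchy--Schwarz; this avoids the fidelity-type inequality entirely, at the price of a looser constant, but $4/\sqrt{10}+1/10<8/5$ so your numbers still close. Both arguments rest on the same two pillars---the decomposition of Lemma~\ref{lem:g_decompose} and $\RState{T}\succeq 0$---so the difference is in how the residue is absorbed: the paper hides it inside the fidelity bound, while you pay for it explicitly via the trace normalization of $\FState{T}$; your observation that $\Tr(\RState{T})$ is controlled by the drift $\H_{T}$ is exactly the ``real work'' you flagged, and it is carried out correctly. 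A minor point in your favor: your explicit POVM accounting yields the unnormalized trace-distance lower bound $8/5$ (i.e.\ $4/5$ normalized), which is what success probability $\frac{9}{10}$ actually implies, and your constants work from there; the paper's opening inequality asserts the stronger normalized bound $\frac{9}{10}$, which its subsequent computation relies on (starting only from $4/5$ one would additionally need $\normb{\ket{\pState{T}}}^2\ge|\inp{\uState{T}}{\pState{T}}|^2$ to recover the $\frac{1}{10}$).
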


\begin{proof}
By our assumption on the correctness of the algorithm,
\begin{align*}
\frac{9}{10} \leq \trnorm{\FState{T}-\Full^{0}_T} &= \trnorm{\FState{T} - \ket{\uState{T}}\bra{\uState{T}} } \\
&\leq \sqrt{1-\bra{\uState{T}}\FState{T}\ket{\uState{T}}}\\
&= \sqrt{1-\bra{\uState{T}}(\ket{\pState{T}}\bra{\pState{T}} + \RState{T})\ket{\uState{T}}}\\
&\leq \sqrt{1-|\inp{\uState{T}}{\pState{T}}|^2}
\end{align*}
where our definition of trace norm is normalized to be in $[0,1]$ and in the second inequality we used that for a (normalized) pure state $\ket{\varphi}$ and a mixed state $\rho$, we have $\trnorm{\rho - \ketbra{\varphi}{\varphi}} \leq \sqrt{1-\bra{\varphi}\rho\ket{\varphi}}$ (see, e.g., \cite[Chapter 9]{Nielsen:book}).
Therefore,
\begin{align*}
\H_{T} &= \normb {\ket{\uState{T}}-\ket{\pState{T}}}^2 \\
   &= \inp{\uState{T}}{\uState{T}} + \inp{\pState{T}}{\pState{T}} - 2 {\rm Re}( \inp{\uState{T}}{\pState{T}}) \\
   &\ge 1 - 2|\inp{\uState{T}}{\pState{T}}| > \frac{1}{10},
\end{align*}
where the next to last inequality uses the fact that $\inp{\uState{T}}{\uState{T}}=1$ and $\inp{\pState{T}}{\pState{T}}\geq 0$.
\end{proof}

The following lemma bounds the amount by which the progress measure $\H_{t}$ can increase in each step.
\begin{lemma}\label{lem:dH}
For all $k \in \{1,\ldots,N\}$ and any $0\leq t<T$,
$$ \H_{t+1} - \H_{t} \leq \frac{1-p}{p} \cdot\|\uamp{t,k}\|^2 .$$
\end{lemma}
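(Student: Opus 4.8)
The plan is to expand both $\H_{t+1}$ and $\H_{t}$ using Definition~\ref{def:H}, and to bound their difference by carefully tracking how the pure-state vectors $\ket{\pState{t}}$ and $\ket{\uState{t}}$ evolve from step $t$ to step $t+1$. The key structural fact I would use is that the evolution of these vectors differs between the $k=0$ case and the $k\geq 1$ case \emph{only} through the oracle application: both experience the same unitary $\U{t+1}$, but in the $k\geq 1$ case Definition~\ref{def:PState} subtracts the extra term $2(1-p)\ket{k,\amp{t,k}}$ before applying $\U{t+1}$, whereas for $k=0$ we have $\ket{\OpState{t+1}}=\ket{\pState{t}}$. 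So the difference vector $\ket{\uState{t+1}}-\ket{\pState{t+1}}$ can be written as $\U{t+1}$ applied to $(\ket{\uState{t}}-\ket{\pState{t}})$ plus a correction coming from that oracle term.

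Concretely, I would first observe that $\ket{\uState{t+1}} = \U{t+1}\ket{\uState{t}}$ (since for $k=0$ the oracle does nothing to the tracked vector), while $\ket{\pState{t+1}} = \U{t+1}\ket{\OpState{t+1}} = \U{t+1}\bigl(\ket{\pState{t}} - 2(1-p)\ket{k,\amp{t,k}}\bigr)$. Since $\U{t+1}$ is unitary and preserves norms, I would write
\begin{align*}
\H_{t+1} &= \normb{\U{t+1}\bigl(\ket{\uState{t}}-\ket{\pState{t}} + 2(1-p)\ket{k,\amp{t,k}}\bigr)}^2 \\
&= \normb{\bigl(\ket{\uState{t}}-\ket{\pState{t}}\bigr) + 2(1-p)\ket{k,\amp{t,k}}}^2.
\end{align*}
Expanding this squared norm gives $\H_{t}$ plus a cross term $4(1-p)\,{\rm Re}\,\bigl(\bra{k,\amp{t,k}}(\ket{\uState{t}}-\ket{\pState{t}})\bigr)$ plus $4(1-p)^2\|\amp{t,k}\|^2$.

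The main work is then controlling the cross term, and this is the step I expect to be the real obstacle. I would note that $\ket{\pState{t}}=\sum_i\ket{i,\amp{t,i}}$, so $\inp{k,\amp{t,k}}{\pState{t}} = \|\amp{t,k}\|^2$, which contributes a clean $-4(1-p)\|\amp{t,k}\|^2$ to the cross term and combines with the quadratic term. The trickier piece is the $\bra{k,\amp{t,k}}\ket{\uState{t}}$ contribution; I would bound $|\inp{k,\amp{t,k}}{\uState{t}}| = |\inp{\amp{t,k}}{\uamp{t,k}}| \leq \|\amp{t,k}\|\cdot\|\uamp{t,k}\|$ by Cauchy--Schwarz. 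After collecting terms, $\H_{t+1}-\H_{t}$ becomes bounded by an expression of the form $4(1-p)\|\amp{t,k}\|\,\|\uamp{t,k}\| - 4(1-p)(2p-1)\|\amp{t,k}\|^2$ or similar, and the final step is an optimization-type inequality: I would maximize over $\|\amp{t,k}\|$ (treating it as a free nonnegative real) and show the bound never exceeds $\frac{1-p}{p}\|\uamp{t,k}\|^2$. This amounts to completing the square or applying the AM--GM inequality $2ab \leq \lambda a^2 + \lambda^{-1}b^2$ with the weight $\lambda$ chosen to make the $\|\amp{t,k}\|^2$ terms cancel against the quadratic term already present, leaving exactly the claimed coefficient $\frac{1-p}{p}$ on $\|\uamp{t,k}\|^2$. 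The delicate point is getting the signs and the $(1-p)$ versus $p$ factors to line up so that the $\|\amp{t,k}\|^2$ dependence disappears and only the desired $\|\uamp{t,k}\|^2$ term survives.
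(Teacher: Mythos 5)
Your proposal follows essentially the same route as the paper: unitarity reduces $\H_{t+1}$ to $\normb{\ket{\pState{t}}-2(1-p)\ket{k,\amp{t,k}}-\ket{\uState{t}}}^2$, the cross terms are evaluated via $\inp{k,\amp{t,k}}{\pState{t}}=\|\amp{t,k}\|^2$ and Cauchy--Schwarz, and the bound follows by maximizing the resulting quadratic over $\|\amp{t,k}\|$. The only slip is the intermediate coefficient: combining $-4(1-p)\|\amp{t,k}\|^2$ with $+4(1-p)^2\|\amp{t,k}\|^2$ gives $-4p(1-p)\|\amp{t,k}\|^2$ (not $-4(1-p)(2p-1)\|\amp{t,k}\|^2$), after which the optimization yields exactly $\frac{1-p}{p}\|\uamp{t,k}\|^2$.
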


\begin{proof}
By the definition of the progress measure,
\begin{align*}
\H_{t+1} &= \normb{\ket{\pState{t+1}} - \ket{\uState{t+1}}}^2\\
&= \normb{U_{t+1}\ket{\OpState{t+1}} - U_{t+1}\ket{\uState{t}}}^2 \\
&= \normb{\ket{\OpState{t+1}}-\ket{\uState{t}}}^2  \\
&=( \bra{\pState{t}} - 2(1-p)\bra{k,\amp{t,k}} - \bra{\uState{t}} )( \ket{\pState{t}} - 2(1-p)\ket{k,\amp{t,k}} - \ket{\uState{t}} )\\
&= \H_{t} -4(1-p)\|\amp{t,k}\|^2  +2(1-p)\inp{\amp{t,k}}{\uamp{t,k}} +2(1-p)\inp{\uamp{t,k}}{\amp{t,k}} + 4(1-p)^2\|\amp{t,k}\|^2\\
&\leq \H_{t} -4p(1-p)\|\amp{t,k}\|^2  +4(1-p)\|\amp{t,k}\|\|\uamp{t,k}\|\\
&\leq \H_{t} + \frac{1-p}{p}\|\uamp{t,k}\|^2
\end{align*}
where the last inequality follows by maximizing the quadratic expression over $\|\amp{t,k}\|$.
\end{proof}

\newtheorem*{thma}{Theorem~\ref{thm:main-thm}}
\begin{thma}
Any algorithm that solves the $p$-faulty Grover problem must use $T > \frac{p}{10(1-p)} N$ queries.
\end{thma}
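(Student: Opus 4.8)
The plan is to combine the three lemmas already established. The progress measure $\H_t$ starts at $\H_0 = 0$, is forced to be large at the end by Lemma~\ref{lem:F} (namely $\H_T > \frac{1}{10}$ for every marked $k$), and increases by only a controlled amount at each step by Lemma~\ref{lem:dH}. Summing the per-step bound of Lemma~\ref{lem:dH} telescopes the left-hand side, giving
\begin{align*}
\H_T = \sum_{t=0}^{T-1} (\H_{t+1} - \H_t) \leq \frac{1-p}{p} \sum_{t=0}^{T-1} \|\uamp{t,k}\|^2.
\end{align*}
So far everything has been stated for a \emph{fixed} marked index $k$, and the right-hand side depends on $k$. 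The key idea is that the vectors $\ket{\uState{t}}$ and the amplitudes $\ket{\uamp{t,i}}$ describe the evolution of the algorithm under the \emph{null} oracle $\OO[0]$, which is the identity; crucially, they do not depend on $k$ at all. This lets me average the bound over all choices of $k$.

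The main step is to average the inequality over $k \in \{1,\ldots,N\}$. Taking the expectation of the telescoped bound over a uniformly random marked index $k$, I would write
\begin{align*}
\Exp_{k}[\H_T] \leq \frac{1-p}{p} \sum_{t=0}^{T-1} \Exp_{k}\!\left[\|\uamp{t,k}\|^2\right] = \frac{1-p}{p} \cdot \frac{1}{N} \sum_{t=0}^{T-1} \sum_{k=1}^{N} \|\uamp{t,k}\|^2.
\end{align*}
The inner sum over $k$ is where the normalization pays off: since $\ket{\uState{t}} = \sum_{i=1}^N \ket{i,\uamp{t,i}}$ is a unit vector (it is $\U{t}$ applied to the unit vector $\ket{\uState{t}}$, obtained by a chain of unitaries from $\ket{\widetilde{\pure}_0}$), we have $\sum_{i=1}^N \|\uamp{t,i}\|^2 = \inp{\uState{t}}{\uState{t}} = 1$. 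Hence $\sum_{k=1}^N \|\uamp{t,k}\|^2 \leq 1$ for every $t$, and the double sum is at most $T$. This collapses the bound to $\Exp_k[\H_T] \leq \frac{1-p}{p} \cdot \frac{T}{N}$.

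To finish, I combine the averaged upper bound with the lower bound from Lemma~\ref{lem:F}. Since $\H_T > \frac{1}{10}$ holds for \emph{every} $k$, the average also satisfies $\Exp_k[\H_T] > \frac{1}{10}$. Chaining the two gives
\begin{align*}
\frac{1}{10} < \Exp_{k}[\H_T] \leq \frac{1-p}{p} \cdot \frac{T}{N},
\end{align*}
which rearranges to $T > \frac{p}{10(1-p)} N$, exactly the claimed bound. I expect no serious obstacle here, since the three lemmas do all the heavy lifting; the one point requiring care is the observation that $\ket{\uState{t}}$ and $\uamp{t,i}$ are independent of $k$ so that the averaging argument is legitimate, together with the normalization identity $\sum_i \|\uamp{t,i}\|^2 = 1$ that makes the sum over $k$ telescope to the query count $T$.
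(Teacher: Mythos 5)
Your proposal is correct and follows essentially the same route as the paper: telescope the per-step bound from Lemma~\ref{lem:dH}, use that $\ket{\uState{t}}$ is a $k$-independent unit vector so that summing (equivalently, averaging) over $k$ turns the amplitude terms into $T$, and compare against the lower bound $\H_T > \frac{1}{10}$ from Lemma~\ref{lem:F}. The only cosmetic difference is that you phrase the aggregation over $k$ as an expectation while the paper sums, which is the same argument up to a factor of $N$.
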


\begin{proof}
By Lemma~\ref{lem:dH}, for all $k \in \{1,\ldots,N\}$,
$$\H_{T} \leq \frac{1-p}{p} \sum_{t=0}^{T-1} \|\uamp{t,k}\|^2.$$
Since for any $t$, $\ket{\uState{t}}$ is a unit vector,
$$\sum_{k=1}^{N}\H_{T}\leq \frac{1-p}{p} \sum_{k=1}^{N}\sum_{t=0}^{T-1} \|\uamp{t,k}\|^2 = \frac{1-p}{p} T.$$
To complete the proof, note that by Lemma~\ref{lem:F}, $\sum_{k=1}^{N}\H_{T}>\frac{1}{10}N$.
\end{proof}

\subsubsection*{Acknowledgments}
We thank Aram Harrow for presenting us with the faulty Grover problem and for useful discussions.
We also thank Nicolas Cerf, Fr\'ed\'eric Magniez, and the anonymous referees for useful comments.

\bibliographystyle{abbrv}
\bibliography{faulty}

\end{document}